\newtheorem{lemma}{Lemma}
\newtheorem{definition}{Definition}
\newtheorem{Pro}{Problem}
\DeclareMathOperator*{\argmin}{arg\,min}
\newcommand{\vect}[1]{\mathbf{#1}}
\DeclareMathOperator{\E}{\mathbb{E}}
\title{Control Barrier Function Augmentation in Sampling-based Control Algorithm for Sample Efficiency}
\author{Chuyuan Tao$^{\dagger}$, Hunmin Kim$^{\dagger}$, Hyungjin Yoon$^{*}$, Naira Hovakimyan$^{\dagger}$, and Petros Voulgaris$^{*}$
\thanks{This work has been supported by the National Science Foundation (CNS-1932529) and UIUC STII-21-06.}
\thanks{$^{\dagger}$Chuyuan Tao, Hunmin Kim, and Naira Hovakimyan are with the Department of Mechanical Science and Engineering, University of Illinois at Urbana-Champaign, USA.
{\tt\small  \{chuyuan2, hunmin, nhovakim\}@illinois.edu}}%
\thanks{$^{*}$Hyungjin Yoon and Petros Voulgaris are with the Department of Mechanical Engineering, University of Nevada, Reno, USA.
{\tt\small  \{hyungjiny, pvoulgaris\}@unr.edu }}
}
\begin{document}

\maketitle

\begin{abstract}
    For a nonlinear stochastic path planning problem, sampling-based algorithms generate thousands of random sample trajectories to find the optimal path while guaranteeing safety by Lagrangian penalty methods. However, the sampling-based algorithm can perform poorly in obstacle-rich environments because most samples might violate safety constraints, invalidating the corresponding samples. To improve the sample efficiency of sampling-based algorithms in cluttered environments, we propose an algorithm based on  model predictive path integral control  and  control barrier functions. The proposed algorithm needs fewer samples and time-steps and has a better performance in cluttered environments compared to the original model predictive path integral control algorithm. 
    
    
    
    

\end{abstract}

\section{Introduction}
Path planning problem aims to find an efficient and collision-free path for a robot from a start position to a target position. There have been proposed various deterministic model-based approaches as solutions to this problem, such as model predictive control (MPC) \cite{paden2016survey}, \cite{falcone2007predictive}, and Bernstein polynomials \cite{choe2015trajectory}. However, these methods usually require calculating the derivative of the dynamics and the cost functions, which makes them computationally expensive when trying to solve nonlinear stochastic optimization problems. Some cost functions might not be differentiable over the entire state space, making them even more complicated. 

Sampling-based algorithms use sample trajectories simulated from the dynamical models. The types of sampling-based algorithms include: rapidly exploring trees (RRT)~\cite{lavalle2001randomized} constructs the tree incrementally from samples drawn from the configuration space; probabilistic roadmap~\cite{kavraki1996probabilistic, kavraki1998analysis} samples configuration in free configuration space and uses a local planner to connect them;  Markov chain Monte Carlo algorithm (MCMC) algorithms~\cite{gelfand1990sampling} are used to approximate the posterior distribution of a parameter of interest by random sampling in a probabilistic space; and model predictive path integral control (MPPI) algorithm~\cite{williams2016aggressive}  uses random diffusion trajectories to determine the stochastic optimal trajectory. These methods were shown to be effective in practical robotic applications~\cite{yang2014spline, lavalle2006planning, williams2018information}, where parallel computing is applicable to sample thousands of trajectories fast in parallel.

However, the uniformly randomized sampling algorithms can become sample-inefficient in extreme circumstances~\cite{wang2018learning}, such as in a small region connecting two large free space areas, i.e., narrow passage problems. It takes more time for sampling-based algorithms to identify a safe path passing through the narrow passage, though the portion of the narrow passage is small compared to the whole configuration space. To address this issue, RRT was developed to utilize the previous exploration paths and generate incremental random paths starting from the previous terminal position~\cite{li2002incremental}. Alternatively, an adaptive sampling method that adaptively adjusts the sample region to avoid collisions was proposed in~\cite{jaillet2005adaptive} as an extension of RRT. Another improved version of RRT that restricts the permitted sampling region based on the previous exploring results was proposed in \cite{yershova2005dynamic}.

In this paper, we aim to develop a MPPI-CBF algorithm that restricts the sampling distribution in the MPPI algorithm \cite{williams2018information} by using the control barrier function (CBF) to improve the sample efficiency of sampling-based algorithms for narrow passage problems. We trade off the computation time for solving a CBF-based optimization problem at every time step to guarantee sampling from a safe region. Therefore, the proposed MPPI-CBF algorithm needs fewer samples and time-steps to achieve similar performance to the MPPI algorithm.

\subsection{Related Work}
Stochastic model predictive control (SMPC) is an optimal control strategy for nonlinear stochastic systems subject to chance constraints. As one of the SMPC strategies, the MPPI algorithm provides an effective solution for finite horizon nonlinear stochastic optimization problems through parallel computing, which allows real-time control of a wide class of robotic applications \cite{williams2018information}, \cite{williams2017information}. The MPPI algorithm has been widely used in nonlinear path planning problems. Reference \cite{pravitra2020} uses robust control to help the MPPI algorithm be more robust when the simulated systems are different from the true dynamics. In \cite{williams2017information}, the author uses machine learning algorithms to handle more general nonlinear control systems. Most existing MPPI algorithms use negative rewards as soft constraints to avoid collision.

Another way to guarantee collision avoidance in path planning problems is to utilize CBF constraints \cite{ames2019control}, which is widely used in safe control problems. CBF methods can also be applied to systems with uncertainties. In \cite{cheng2019end}, the author uses CBF to guide the reinforcement learning (RL) algorithm to explore in a safe policy set. In \cite{zhao2020adaptive}, the paper uses a robust control method in CBF reformulate a method that guarantees the robust safety of the system. Furthermore, in \cite{song2021generalization}, a framework generating optimal control algorithm while guaranteeing safety on the multi-agent system is discussed. 


\section{Notations}
We use the subscript $t$ to denote the discrete time index and the superscript $i$ to denote the sample index, e.g., $x^i_t$ denotes the state of the $i$\textsuperscript{th} sample at time $t$. Let $\tau : [ t_0, T] \rightarrow \mathbb{R}^n$  represent a sample trajectory of the system. We use $\tau_i$ to represent the sample trajectory when the sample index is $i$. For a symmetric matrix $\mathbf{A}$, $\mathbf{A} \succ 0$ and $\mathbf{A}\succeq 0$ indicate that $\mathbf{A}$ is positive definite and positive semi-definite, respectively. The matrix $\mathbf{I}$ denotes the identity matrix with an appropriate dimension. We use $\Vert \cdot \Vert$ to denote the standard Euclidean norm for vector or an induced matrix norm if it is not specified, $\E[\cdot]$ to denote the expectation, and $\text{Var}(\cdot)$ to denote the variance. We use $L$ to represent the Lie derivative of a function. We define $V=(v_0,v_1,\dots,v_{T-1})$ as a sequence of actual control inputs over some number of timesteps $T$, and $U=(u_0,u_1,\dots,u_{T-1})$ as a sequence of mean input variables. The list of the notations is summarized in Table~\ref{tab:notation}.

\begin{table}[!t] \vspace{0.15cm}
\caption{List of Notations}
\label{tab:notation}
\begin{center}
\begin{small}
\begin{tabular}{cl}
\toprule
Notation & Description \\
\midrule


$t$    & Time index for path-planning.\\ 
$i$    & Sample index for path-planning.\\
$\vect{x}_{t}^i$  & States of the agent for $i_{th}$ sample at time $t$.\\ 
$\tau_i$ & Sample trajectory when the index is $i$. \\
$V$ & Real control input sequence.\\
$T$ & Sample trajectories time horizon.\\
$N$ & Number of the sample trajectories.\\
$\omega$ & The weights of each sample.\\

\bottomrule
\end{tabular}
\end{small}
\end{center}
\vskip -0.1in
\end{table}


\section{Problem Statement} 

Consider the stochastic dynamic system to be in the following form:
\begin{equation}
    dx = f(x_t,t) dt + G(x_t, t)(dW(t) +u(x_t,t)dt), \label{eq:dynamics}
\end{equation}
where $x_t \in \mathbb{R}^n$ represents the state of an agent at time $t$, and let $u(x_t,t)\in \mathbb{R}^m$  be the control input of the system, and $dW(t) \in \mathbb{R}^p$ is a Brownian motion disturbance on the dynamical system.

We suppose that the finite time horizon optimization problem has a quadratic control cost and an arbitrary state-dependent cost. The value function $V(x_t,t)$ is then defined as
\begin{equation}\label{eq:stochastic_prob}
    \min_u \mathbb{E}_{\mathbb{Q}}\left[ \phi(x_T) + \int_t^T(q(x_t,t)+\frac{1}{2}u^TR(x_t,t)u)dt \right],
\end{equation}
where $R(x_t,t)$ is a positive definite matrix, $\phi(x_t)$ denotes a terminal cost, and $q(x_t,t)$ is a state-dependent cost. We assume that the operating environment is obstacle rich and let $C$ represent a known safe set, i.e. if $x_t \in C$ for $\forall t\geq 0$, then the system is safe.

\begin{Pro}\label{pro1}
Given $x_0 \in C$, the problem is to develop a sampling-based algorithm for the optimization problem in (\ref{eq:stochastic_prob}) subject to \eqref{eq:dynamics}. while guaranteeing $x_t \in C$ for $\forall t\geq0$.

\end{Pro}

\section{Control Barrier Function based Model Predictive Path Integral Control (MPPI-CBF)}\label{sec:algorithm}


To address Problem~\ref{pro1} for cluttered environments, we develop a MPPI-CBF algorithm where CBF is applied to determine the sampling distribution of each step in the MPPI algorithm to guarantee safety constraints in sampling based methods. Since the proposed algorithm is based on MPPI, we first introduce a summary of the MPPI algorithm in Section \ref{sec:MPPI}. Section \ref{sec:MPPI-CBF} discusses the naive combination of the MPPI algorithm and CBF. Lastly, Section \ref{sec:MPPI-CBF_new} proposes the MPPI-CBF algorithm, which achieves an efficient sampling by trading off it with computational complexity.

\subsection{Model Predictive Path Integral Control (MPPI)}\label{sec:MPPI}

MPPI algorithm \cite{williams2018information}, \cite{williams2017information} is a sampling-based algorithm and provides a framework to solve nonlinear control problems efficiently for non-cluttered environments. By parallelizing the sampling step using Graphics Process Unit (GPU), the MPPI algorithm could sample thousands of trajectories of nonlinear complex dynamical systems in milliseconds~\cite{williams2016aggressive}. The algorithm does not require finding derivatives of the dynamical system. Instead, sampled control inputs are propagated through the nonlinear dynamical system, and this allows to evaluate the cost for the corresponding sample trajectory.

In~\cite{williams2017information}, the MPPI algorithm applies Jensen's inequality to the free energy function of the control system to reach a tight bound. Next, by the tight bound, we define the density function of the approximate optimal distribution $\mathbb{Q}^*$ to be:
\begin{equation}
    q^*(V)=\frac{1}{\eta} \exp{(-\frac{1}{\lambda}S(\tau))}p(\tau),
\end{equation}
where $p(\tau)$ is the Gaussian distribution density function of $\mathbb{P}$, $\eta$ is a normalizing constant, and the function $S(\tau)= \phi(x_T)+\int_{t_0}^Tq(x_t,t)dt$ is the cost-to-go of the state-dependent cost.

Following the approach in~\cite{williams2016aggressive}, we have a new optimization problem by pushing the controlled distribution as close as possible to the optimal control input:
\begin{equation} \label{eq:kl-equation}
    u^*(\cdot)=\argmin_{u(\cdot)}\mathbb{D}_{KL}(\mathbb{Q}^* || \mathbb{Q}),
\end{equation}
where $\mathbb{D}_{KL}(\cdot | \cdot)$ denotes KL-divergence, which represents the distance between two distributions. The KL-divergence can be simplified to a convex quadratic function of $u$ by taking the density function $p(V)$ and $q(V)$ into the equations. By finding the gradient of convex equation and setting it equal to zero, we have the optimal control input as:
\begin{equation}
    u^*_t=\int q^*(V)v_t dV.
\end{equation}
Using the importance sampling weights and Monte-Carlo estimate \cite{williams2018information}, the control update law is given as:
\begin{equation}\label{eq:control_update}
    u^*(x_t,t) = u(x_t, t) +\frac{\mathbb{E}_q \left [\exp(-\frac{1}{\lambda}S(\tau))\delta u  \right ] }{\mathbb{E}_q \left [ \exp(-\frac{1}{\lambda}S(\tau)) \right ]},
\end{equation}
where $\lambda$ is the temperature parameter of Gibbs distribution, $\delta u$ is the random control input, and $\tau$ is the sample trajectory. The above equation \eqref{eq:control_update} can be approximated as:
\begin{equation} \label{eq:discrete_update_law}
    \hat{u}^*(x_{t_i},t_i) \approx u(x_{t_i},t_i)+\frac{\sum^K_{k=1}\exp(-\frac{1}{\lambda}S(\tau_{i,k}))\delta u_{i,k}}{\sum^K_{k=1}\exp(-\frac{1}{\lambda}S(\tau_{i,k}))}.
\end{equation}

It is common to use an indicator function as a penalty in the cost-to-go function $S(\tau)$ in solving path planning problems. However, in narrow passage problems, the majority of the random sample trajectories will violate the safe constraints, and have a large negative penalty. Thus, the sample efficiency will become low and the algorithm needs more samples to finish the narrow path problems. We use the control barrier function instead of adding penalties in the cost function to guarantee safety and increase sample efficiency.

\subsection{CBF-Based Compensating Control with MPPI} \label{sec:MPPI-CBF}
We first introduce a straightforward shielded MPPI-CBF algorithm to illustrate the necessity of our framework. The concept is akin to shielded RL \cite{fisac2018general}, which uses the CBF to compensate for the unsafe actions of the controller.  

The CBF algorithm uses a Lyapunov-like argument to provide a sufficient condition for ensuring forward invariance of the safe set $C$. The following defines the CBF function.
\begin{definition} \cite{ames2019control} \label{def:CBF}
Let $C \in \mathbb{R}^n$ be the super-level set of a continuously differentiable function $h \in \mathbb{R}^n \rightarrow \mathbb{R} $. Then $h$ is a control barrier function if there exists an extended class $K_\infty$ function $\alpha$ such that for the control system:
\begin{equation}
    \sup_{u\in U} [L_f h(x)+ L_gh(x)u]\geq -\alpha(h(x)),
\end{equation}
for all $x\in D$.
\end{definition}

We assume a feedback controller is defined as $u= k(x)$ and formulate a quadratic programming problem (QP) based on the CBF constraints to guarantee safety:
\begin{equation}\label{eq:qp}
    \begin{aligned}
u_{\text{CBF}}=\argmin_u\quad & \frac{1}{2} \left \| u - k(x) \right \|^2_2 \\
\textrm{s.t.} \quad & L_f h(x) + L_gh(x)u \geq -\alpha(h(x)).\\
\end{aligned}
\end{equation}
The shielded MPPI-CBF algorithm uses the CBF as a safe filter for the control output of the MPPI controller, and it is shown in Fig. \ref{fig:naive_idea_structure}. The control update law becomes
\begin{equation}
    u_k(x_t) = u_{\text{MPPI}}(x_t)+u_{\text{CBF}}(x_t, u_{\text{MPPI}}(x_t)).
\end{equation}

The MPPI algorithm provides the nominal control input and using the CBF-based optimization to find a minimum control intervention renders the safe set forward invariant.
\begin{figure}
    \centering
    \includegraphics[width=8cm]{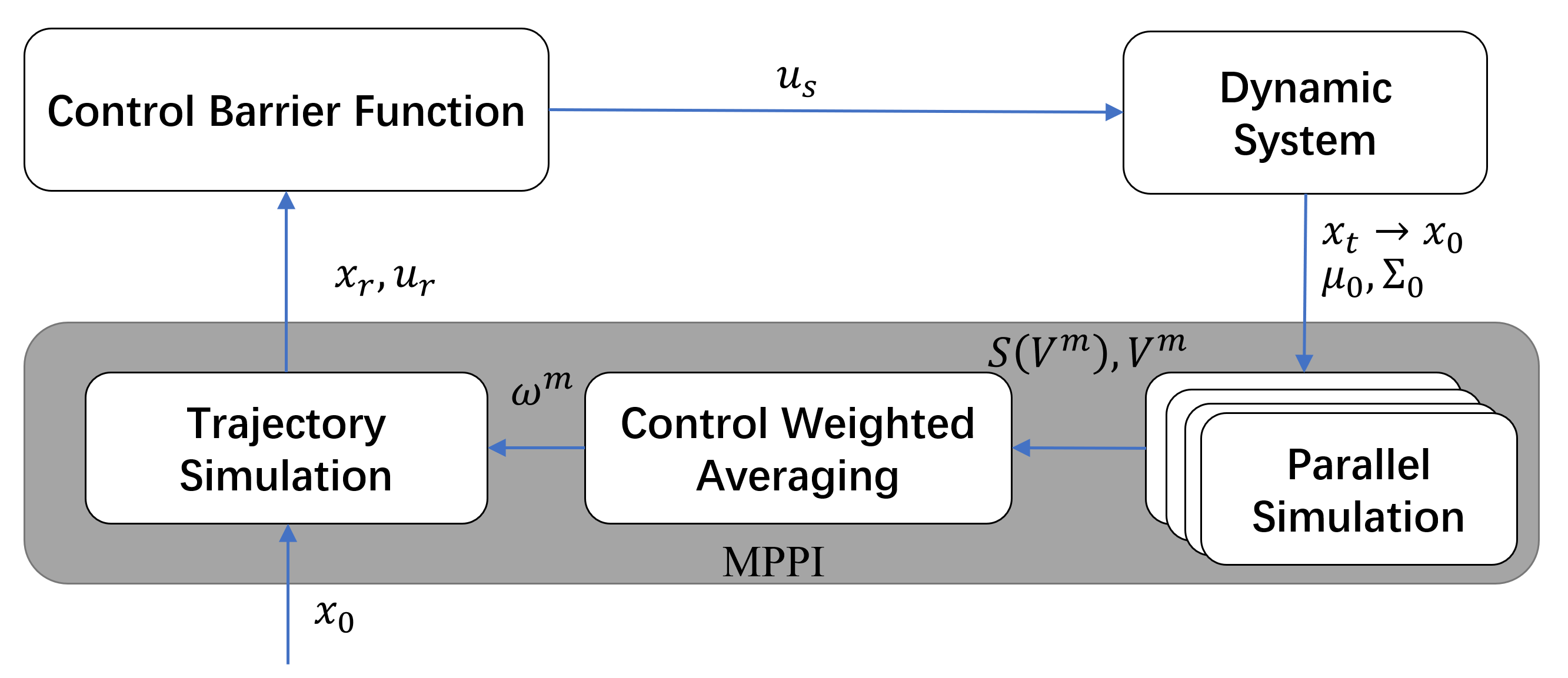}
    \caption{Control architecture combining MPPI control with CBF to compensate for unsafe actions without considering the influence of CBF on the weights of the samples. The method guarantees safety but does not yield identical  performance level to MPPI.}
    \label{fig:naive_idea_structure}
\end{figure}
Note that comparing to the CBF optimization in \eqref{eq:qp}, the constraint of the stochastic system is not deterministic anymore. Therefore, we calculate the trust region of the control input by a chance constraint to ensure the safety of the robot. Therefore, the stochastic quadratic programming at each time step is:
\begin{equation} \label{eq:simple_QP}
    \begin{aligned}
    &\argmin_{u} \left \|u - u_{\text{MPPI}} \right \|^2_2 \\
    \textrm{s.t.}\quad  &Pr(L_gh(x)u \geq -\alpha(h(x)) - L_f h(x)) \geq 1-\delta. \quad
    \end{aligned}
\end{equation}

A critical issue for this shielded MPPI-CBF algorithm is that the CBF constraints override the control inputs of the MPPI algorithm without considering the sampling distribution. Thus, in the presence of the safety challenges, the CBF constraints hinder the MPPI algorithm from exploring a feasible region. As a result, the CBF-based compensation controller with the MPPI algorithm prevents the robot from completing the mission. Therefore, we develop an MPPI-CBF algorithm based on sampling in safe regions instead of sampling uniformly and overriding the control input.

\subsection{CBF-Based Trust Region Sample Control with MPPI} \label{sec:MPPI-CBF_new}
Instead of using CBF as a safe filter for the control input, we propose a new MPPI-CBF algorithm using CBF to determine the safe regions to sample. First, we define the initial control input samples from a Gaussian distribution $u_0 \sim \mathcal{N}_k( \mu_0, \Sigma_0)$. Next, we assume the safe control input regions can be approximated by a Gaussian distribution $u_t^i \sim \mathcal{N}_k (\mu_t^i, \Sigma_t^i)$. Note the safe control input region is a time-varying and state-dependent set.


Also, comparing to the QP optimization problem in \eqref{eq:simple_QP}, no nominal control is provided. Hence, we define a new convex cost function for the mean and variance of the sampling distributions:
\begin{equation}
    \mathscr{L}(\mu, \Sigma) = \left \| \mu - \mu_0 \right \|_1 + \left \| \Sigma -\Sigma_0 \right \|_p,
\end{equation}
where $\mu, \Sigma$ are the variables to optimize and $\left \| \cdot \right \|_p$ represents any matrix norm. We will use the above cost function for the CBF-based optimization to calculate the smallest change to the original control input distribution. We will find the safe mean and variance for the new sample trajectories at each time step. This structure is illustrated in Fig. \ref{fig:final_idea_structure}.
\begin{figure}
    \centering
    \includegraphics[width=8.5cm]{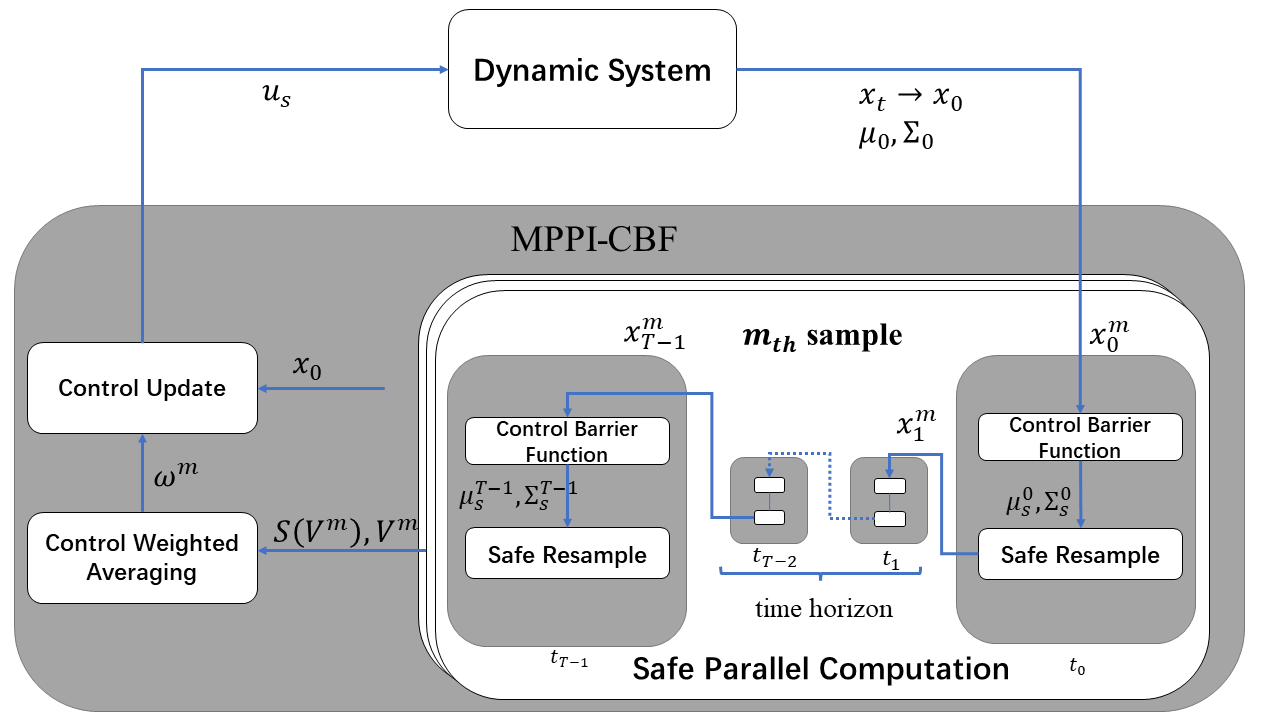}
    \caption{We solve the SDP optimization~\eqref{eq:SDP for MPPI_CBF} for each sample and at each time step to find a safe mean and variance and sample a new safe control input at this time step. Next, we propagate through the dynamics of the robot and repeat at the next time step. Therefore, we can sample safe trajectories and calculate their weights for the control update law. Finally, we  update the control policy based on the weights.}
    \label{fig:final_idea_structure}
\end{figure}

Combining with the probabilistic CBF constraints defined in \eqref{eq:simple_QP}, we reformulate the optimization problem as
\begin{equation}\label{eq_optimal_original}
    \begin{aligned}
    &\argmin_{\mu, \Sigma} \left \| \mu - \mu_0 \right \|_1 + \left \| \Sigma -\Sigma_0 \right \|_p  \\
    \textrm{s.t.}\quad  &\text{Pr}(L_gh(x_t^i)u_t^i \geq -\alpha(h(x_t^i)) - L_f h(x_t^i)) \geq 1-\delta .
    \end{aligned}
\end{equation}

Let $A_{i,t} = L_g h(x_i^t)$, $b_{i,t}=-\alpha(h(x_i^t)) - L_f h(x_i^t)$. Then the left hand side of the control barrier function constraint can be considered as a new random variable $A_{i,t} u_t^i$, having mean $A_{i,t} \mu_t^i$ and variance $A_{i,t}\Sigma_t^i A_{i,t}^T$.
Using the chance constraint with the trust region probability $1-\delta$, the control barrier function constraint is simplified to
\begin{equation}
    A_{i,t} \mu - c A_{i,t} \Sigma A_{i,t}^T \geq b_{i,t},
\end{equation}
where $c$ is the trust region corresponding to the probability $1-\delta$. To apply the CBF on the mean and variance, we also need to consider another constraint that the variance matrix of the control input random variable $u_t^i$ must be positive semidefinite
\begin{equation}
    \Sigma \succeq 0.
\end{equation}
Then optimization problem (\ref{eq_optimal_original}) is reformulated to the following convex optimization problem:
\begin{equation}\label{eq:convex_opt}
    \begin{aligned}
        \argmin_{\mu, \Sigma} &\left \| \mu - \mu_0 \right \|_1 + \left \| \Sigma -\Sigma_0 \right \|_p \\
        \textrm{s.t.}\quad & A_{i,t} \mu -c A_{i,t} \Sigma A_{i,t}^T \geq b_i \\
        & \Sigma \succeq 0.
    \end{aligned} 
\end{equation}
This convex problem can be further transformed into a semidefinite programming problem (SDP), which can be solved more efficiently \cite{peng2012faster}.

A standard SDP problem is written as
\begin{equation}
\begin{aligned}
    \min \quad &C\cdot X \\
    \text{s.t.}\quad  &A_i \cdot X = b_i \quad \text{for } i = 1, \dots, n  \\
    &X \succeq 0,
\end{aligned}
\end{equation}
where $X\in S^n$, with $S^n$ being the space of real symmetric $n\times n$ matrices. The vector $b\in \mathbb{R}^m$, and the matrices $A_i\in S^n$ and $C\in S^n$ are given parameters. While the standard SDP problem can also be cast in various ways, one of the most common formats is the inequality form is given in \cite{wolkowicz2012handbook}:
\begin{equation}\label{eq:sdp_ineq}
    \begin{aligned}
        \min \quad & c^T  x \\
        \text{s.t.}\quad  & \sum_{i=1}^m x_i A_i \preceq B ,
    \end{aligned}
\end{equation}
where c is a vector and $A_i, B \in S^n$ are matrices. The problem is easy to transform to SDP problem by introducing a slack variable $S$ to change the inequality to equality. Then the problem is written as:
\begin{equation}
    \begin{aligned}
        \min \quad & c^T  x \\
        \text{s.t.}\quad  & \sum_{i=1}^m x_i A_i +S = B \\
        &S \succeq 0.
    \end{aligned}
\end{equation}

\begin{lemma}
The convex optimization problem \eqref{eq:convex_opt} can be cast as a SDP problem.
\end{lemma}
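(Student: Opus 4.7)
The plan is to bring the objective into a linear form via epigraph variables and to recognize that every constraint of \eqref{eq:convex_opt} admits a linear matrix inequality (LMI) representation, so the overall problem fits the inequality-form SDP template \eqref{eq:sdp_ineq}.

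First, I would linearize the objective. Introduce an auxiliary vector $t \in \mathbb{R}^k$ and a scalar $s \geq 0$, replace the cost by $\mathbf{1}^T t + s$, and impose the epigraph constraints $-t_j \leq (\mu - \mu_0)_j \leq t_j$ coordinatewise together with $\|\Sigma - \Sigma_0\|_p \leq s$. The bounds on the components of $\mu - \mu_0$ are scalar affine inequalities, hence trivial $1\times 1$ LMIs. For the matrix-norm epigraph I would fix one of the standard SDP-representable norms (spectral, Frobenius, or nuclear). Taking the spectral norm as a concrete illustration, a Schur complement gives
\begin{equation*}
\|\Sigma - \Sigma_0\|_2 \leq s \iff \begin{pmatrix} s I & \Sigma - \Sigma_0 \\ \Sigma - \Sigma_0 & s I \end{pmatrix} \succeq 0,
\end{equation*}
which is affine in the variables $(\Sigma,s)$.

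Next, I would observe that the remaining constraints already are LMIs: the probabilistic CBF inequality $A_{i,t}\mu - c\, A_{i,t}\Sigma A_{i,t}^T \geq b_i$ is scalar and affine in $(\mu,\Sigma)$, and $\Sigma \succeq 0$ is an LMI by definition. Stacking all of these constraints block-diagonally produces a single LMI of the form $\sum_j x_j \tilde A_j \preceq \tilde B$, where the decision vector $x$ collects the entries of $\mu$, the upper-triangular entries of $\Sigma$, the slack vector $t$, and the slack $s$. Picking the cost vector that selects the linear functional $\mathbf{1}^T t + s$ then recovers the inequality form \eqref{eq:sdp_ineq} verbatim, and the equivalence with \eqref{eq:convex_opt} follows because the epigraph reformulations are tight at every optimum.

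The main obstacle is the generality of the matrix norm $\|\cdot\|_p$: the statement allows an arbitrary norm, but not every matrix norm is SDP-representable. I would resolve this by explicitly restricting $\|\cdot\|_p$ to norms that are known to admit LMI epigraphs (spectral, Frobenius, nuclear, and their duals) and citing standard SDP-representation results; the rest of the argument is bookkeeping once these LMI reformulations are assembled.
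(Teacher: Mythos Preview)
Your proposal is correct and mirrors the paper's own proof: epigraph slacks linearize the $\ell_1$ term, and the Schur-complement block $\begin{pmatrix} tI & A(x) \\ A(x)^\top & tI \end{pmatrix} \succeq 0$ handles the matrix-norm term, after which the remaining affine inequality and $\Sigma\succeq 0$ are already LMIs. Your caveat that this LMI epigraph only works for SDP-representable norms (spectral, Frobenius, nuclear) is a point the paper's proof tacitly assumes but does not state.
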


\begin{proof}
The first part of the cost function in~\eqref{eq:convex_opt} is a vector norm for the mean difference. We rewrite it as $\sum_i |\mu_i-\mu_0|$, and redefine the variable $s_i = |\mu_i -\mu_{0i}|$. Then the cost function becomes a linear combination of the variables $s_i$. Using $A(x) = A_0 + x_1 A_1 + \cdots + x_n A_n$ to represent the symmetric matrix, where $A_i \in \mathbb{R}^{m\times m}, i = 0, \dots, n$, the second part of the cost function becomes $\min\|A(x)\|_p$, This can be further reformed as follows with a new variable $t>0$:
\begin{equation}\label{eq:lem}
    \begin{aligned}
        \min \quad &t \\
        \text{s.t} \quad & \begin{bmatrix}
        tI & A(x)\\
        A(x)^T & tI
        \end{bmatrix}\succeq 0.
    \end{aligned}
\end{equation}
Then, the cost functions of the first and second parts are linear and the constraints in~\eqref{eq:convex_opt} and those in~\eqref{eq:lem} follow the form of SDP in~\eqref{eq:sdp_ineq}. This completes the proof.
\end{proof}

For any $\Sigma_0$, there exists matrix $P_0$ such that $\Sigma_0 = P_0 P_0^T$. Furthermore, for any matrix $P$, $\Sigma = PP^T$ becomes a symmetric definite matrix. Using this fact, we can simplify the constraint on the variance matrix, and the problem~\eqref{eq:convex_opt} becomes
\begin{equation}
        \begin{aligned}
        \argmin_{\mu, P} &\left \| \mu - \mu_0 \right \|_1 + \left \| P -P_0 \right \|_p  \\
        \textrm{s.t.} & \begin{pmatrix}
                    I & \sqrt{c}A_{i,t}P\\ 
                    \sqrt{c}P^TA_{i,t}^T & \mu A_{i,t}-b
                    \end{pmatrix} \succeq 0.
        \end{aligned} \label{eq:SDP for MPPI_CBF}
\end{equation}

The solution of optimization (\ref{eq:SDP for MPPI_CBF}) provides the safe mean $\mu_s$ and the safe variance $\Sigma_s= PP^T$ of the sample regions. Next, we will sample a new control input based on the new distribution and propagate through the dynamics \eqref{eq:dynamics}  to get the state $\tau_{i,k}$ of the new safe trajectories $\tau_{k}$. It guarantees the sample trajectories are in the safe set $C$ defined by the control barrier function with probability $1-\delta$ for each time step. We will calculate the cost of each sample trajectory by using function $S(\cdot)$ and using the following equation to calculate the weight of each trajectory:
\begin{equation}\label{eq:weight_calculate}
    \omega_{i,k} = \exp (-\frac{1}{\lambda}S(\tau_{i,k})).
\end{equation}
We define the sample weight $\omega_{i,k} = \exp (-\frac{1}{\lambda}S(\tau_{i,k}))$. We use the control update law:
\begin{equation}\label{eq:update_law}
    u^*(x_{t_i},t_i) \approx u(x_{t_i},t_i)+\frac{\sum^K_{k=1}\omega_{i,k}\delta u_{i,k}}{\sum^K_{k=1}\omega_{i,k}}.
\end{equation}
The algorithm is summarized in Algorithm~\ref{algo1}.

Comparing to the original MPPI algorithm, the MPPI-CBF algorithm needs more computation time per sample because it requires solving a SDP optimization problem every time step. However, using parallel computing, we can solve SDP optimization problems efficiently. In \cite{warmuth2008randomized} \cite{helmberg1996interior} \cite{peng2012faster}, it is shown that using parallel computing, the SDP problems can be solved almost as efficiently as linear programming.

\begin{algorithm}
\caption{MPPI-CBF algorithm}
\begin{algorithmic}\label{algo1}
\WHILE{task is not completed}
    \FOR{$k \leftarrow 0$ to $K-1$}
        \STATE{$x \gets x_0$;}
        \FOR{$t \gets 1$ to $T$}
            \STATE{Solving SDP in \eqref{eq:SDP for MPPI_CBF} to get $\mu_s, \Sigma_s$;} 
            \STATE{Generate control variations $\delta u_{i,k}$ $\sim \mathscr{N}(\mu_s,\Sigma_s)$;}
            \STATE{Simulate discrete dynamic $x_t^k = x_{t-1}^k + (f+G(u_{i}+\delta u_{i,k}))\delta t$ to obtain $\tau_i^k$;}
            \STATE{Calculate cost function $S(\tau_i^k)$;}
        \ENDFOR
    \ENDFOR
    \STATE{$\beta \gets \min_k[S(x_t^k)]$;}
    \STATE{Get resample weights $\omega_{i,k}$ using \eqref{eq:weight_calculate};}
    \STATE{Update control input using $\omega_t$ and $\delta u_{i,k}$ using \eqref{eq:update_law};}
    \STATE{Send $u_{t_0}$ to actuator;} 
    \FOR{$i \gets 0$ to $T-2$}
        \STATE{$u_i = u_{i+1}$;}
    \ENDFOR
    \STATE{$u_{N-1}=u_{init}$;}
\ENDWHILE
\end{algorithmic}
\end{algorithm}

\section{Experiments} \label{sec:exper}
In this section, we implement the algorithm on a nonlinear control-affine system. We design two different path planning simulation scenarios, single-obstacle avoidance and multi-obstacle narrow passage problems. We validate the effectiveness of the proposed methodology in the aforementioned two scenarios, from the perspective of safety guarantees and algorithm performance, by comparing with the MPPI-based algorithm.
\subsection{Unicycle Dynamics}\label{sec:uni-dyn}
We consider a two-dimensional unicycle model in the simulation. In particular, the dynamical model of a unicycle is represented as:
\begin{equation}
    \begin{bmatrix}
\dot {x}\\ 
\dot {y}\\ 
\dot {\theta}
\end{bmatrix} = \begin{bmatrix}
\cos \theta & 0\\ 
\sin \theta & 0\\ 
0 & 1
\end{bmatrix}\begin{bmatrix}
v + \delta_v\\ 
\omega + \delta_\omega
\end{bmatrix},
\end{equation}
where $x, y$ are the coordinates, $\theta$ is the angle, $v$ is the linear velocity and $\omega$ is the angular velocity of the unicycle model. $\delta_v, \delta_\omega \sim \mathscr{N}(\mu_0, \Sigma_0)$ denote the random perturbation to the control input. The time step for the discrete time control update law \eqref{eq:discrete_update_law} is $\Delta t = 0.05 s$.

\subsection{Unicycle Single-Obstacle Avoidance Problem}
In the first simulation, the path planning problem starts from position $(0,0)$ and aims to reach the target position $p_{d}$ at $(x_{d},y_{d})$. The unicycle needs to avoid a circular obstacle centered at $(O_x, O_y)=(2.2,2.0)$, with the radius of $O_r=0.5$. This control objective is enforced as a safety-critical constraint through the following CBF:
\begin{equation}
    h(x,y) =  (x-O_x)^2 + (y-O_y)^2 - O_r ^2.
\end{equation}
The running cost function is designed in the form of:
\begin{equation}
    q(x) = 10 \left \| p_{\text{d}} - p \right \|_2^2 +  \left \|v_{\text{d}} - v\right \|_2^2,\label{eq:sim_cost_function}
\end{equation}
where $v_{\text{d}}$ is the desired velocity, and $p(x,y)$ is the current position of the unicycle. The initial mean and variance matrix of the noise distribution are: $\mu_0 = [0,0], \Sigma_0 =I$. The target is set at $(4,4)$ and the desired velocity is $v_d=2$. We first use 50, 100, 200, and 500 samples and 20 time steps in the simulation. The probability of the trust region is $1-\delta = 99.8 \%$. The result is shown in Fig. \ref{fig:single_CBF_traj}. As Fig. \ref{fig:single_CBF_traj} demonstrates, the MPPI-CBF algorithm can always drive the unicycle to the target position via an optimal path with various sample sizes. Meanwhile, when the samples are sufficient (i.e., greater than 100 samples in the simulated scenario), the proposed algorithm also guarantees safety by reserving a margin around the obstacle.

We repeat the same simulation with the MPPI algorithm for comparison. We use a similar cost function form but introduce an indicator function term for penalization whenever the trajectory violates the safety set; then the revised cost function is in the form of:
\begin{equation}
    q(x) = 10 \left \| p_{\text{d}} - p \right \|_2^2 +  \left \|v_{\text{d}} - v\right \|_2^2 +10000*\mathbf{1}_{p \in \overline{C}},
\end{equation}
where $\overline{C}$ is the complementary to the safe set $C$ over $\mathbb{R}^2$, $\mathbf{1}$ is the indicator function, and $C=\left \{(x,y)|(x-O_{x})^2 + (y-O_{y})^2 - O_{r} ^2 \geq 0 \right \}$.
The result of the MPPI algorithm is shown in Fig. \ref{fig:single_MPPI_traj}. In this single-obstacle avoidance scenario, it is observed that the MPPI algorithm can also guarantee safety by simply adding an indicator term in the cost function. However, comparing to the MPPI-CBF algorithm, the trajectories from the MPPI algorithm are more aggressive, which can be explained as the MPPI algorithm samples uniformly in the configuration space, whereas the MPPI-CBF algorithm samples in a safe but more conservative space. However, the MPPI-CBF algorithm has a more conservative result in this considered scenario. The two algorithms need almost the same total number of time steps to finish the path planning problem. We can see from Fig. \ref{fig:cost_single} where we plot the cost of each algorithm. Beyond the conservativeness of the running trajectory, there is no significant difference between the proposed algorithm and the baseline MPPI algorithm from the perspective of cost-reduction profile, which is illustrated in Fig. \ref{fig:cost_single}.
\begin{figure}[ht] 
\begin{subfigure}{.24\textwidth}
  \centering
  \includegraphics[width=1.\linewidth]{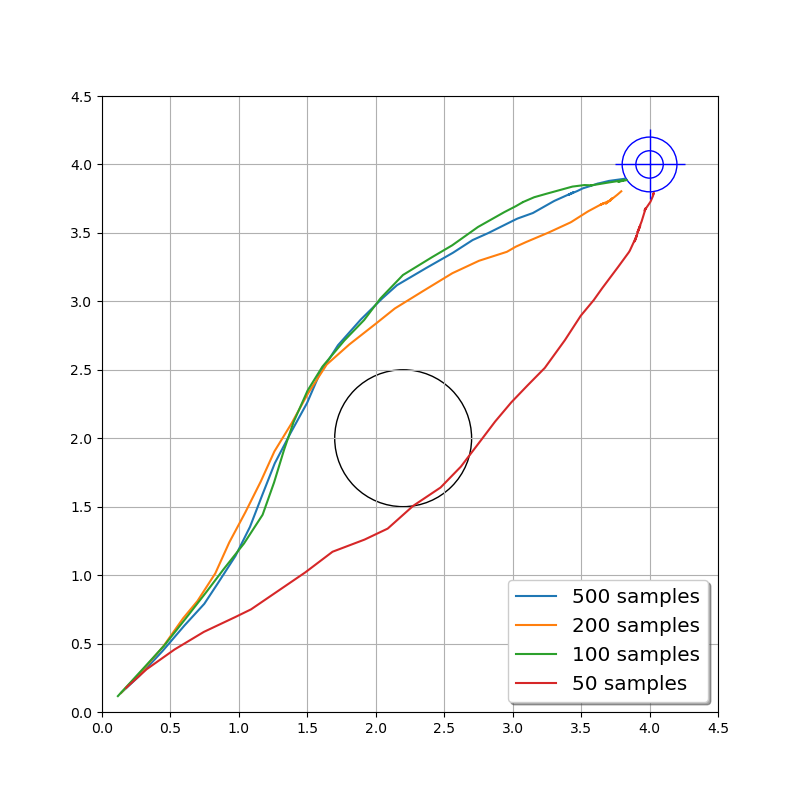}
  \caption{MPPI-CBF algorithm for single-obstacle simulation}
  \label{fig:single_CBF_traj}
\end{subfigure}
\begin{subfigure}{.24\textwidth}
  \centering
  \includegraphics[width=1.\linewidth]{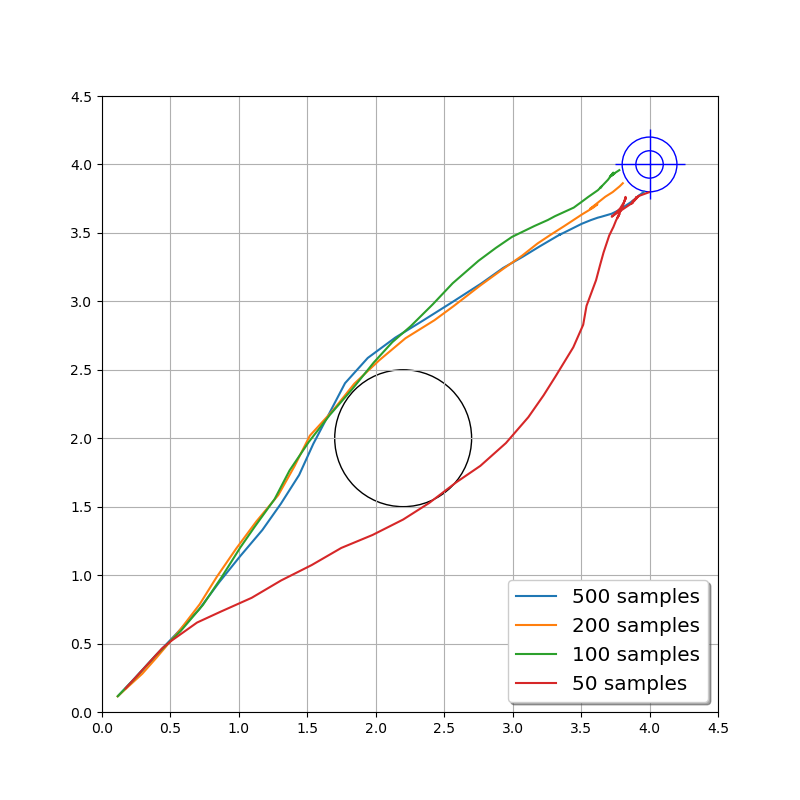}  
  \caption{MPPI algorithm for single-obstacle simulation}
  \label{fig:single_MPPI_traj}
\end{subfigure}
\caption{We use 500, 200, 100, and 50 samples to test in the single-obstacle avoidance scenario where the start point is the origin, the target is the blue circle at position $(4,4)$, and the obstacle is denoted by the black circle.}
\end{figure}

\begin{figure}[ht]
    \centering
    \includegraphics[width=6cm]{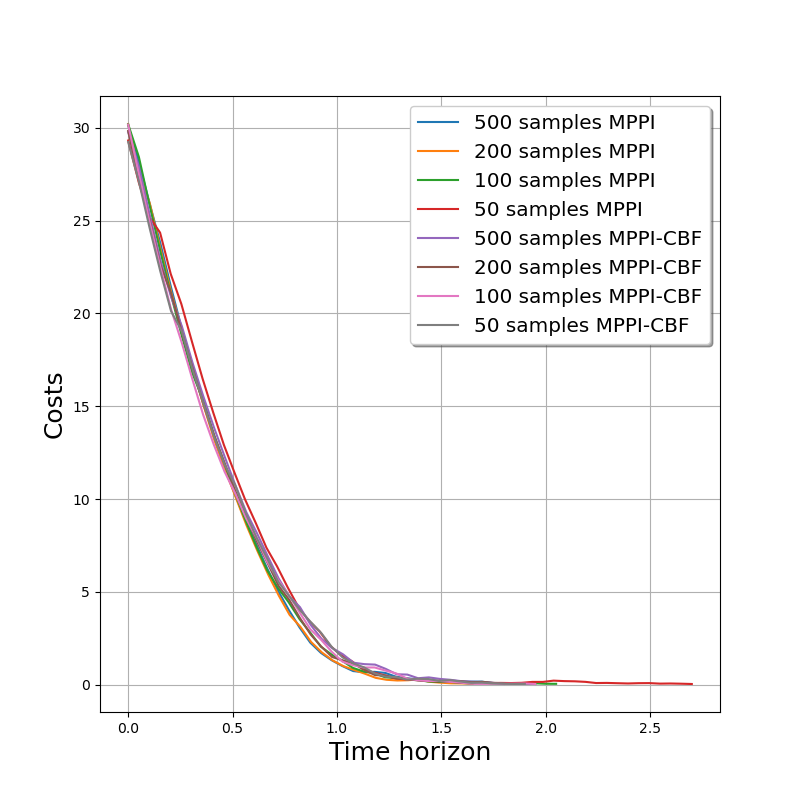}
    \caption{The cost of the MPPI algorithm and the MPPI-CBF algorithm. In the single-obstacle avoidance scenario, the MPPI and the MPPI-CBF algorithm spend almost the same time steps to solve the path planning problems.}
    \label{fig:cost_single}
\end{figure}

We also plot all the sample trajectories at time $t=0.35s$ in Fig. \ref{fig:single_plot}. The black circle represents the obstacle, and the sample trajectories are colored based on the costs. From the plots, we can conclude that the MPPI algorithm samples uniformly on the entire configuration space, and some sample trajectories violate the safety constraints. However, for MPPI-CBF, which chooses samples from a safe set with high probability, and from Fig. \ref{fig:single_plot}, it is observed that almost all the samples are in the safe set. Though conservative, the proposed MPPI-CBF algorithm provides more safe samples than the MPPI algorithm. This advantage will be more obviously shown in the narrow passage problem in the next section. 

\begin{figure}[ht] 
\begin{subfigure}{.24\textwidth}
  \centering
  \includegraphics[width=1.2\linewidth]{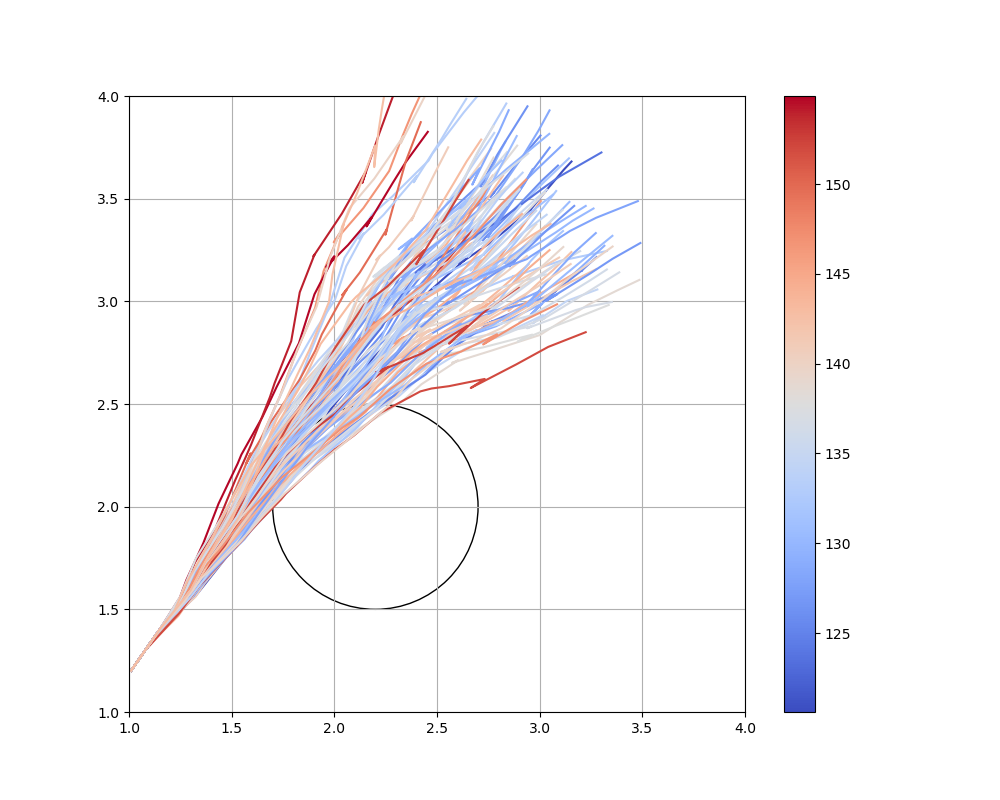}
  \caption{100 sample trajectories of MPPI algorithm at time $t=0.35s$}
  \label{fig:MPPI_sample}
\end{subfigure}
\begin{subfigure}{.24\textwidth}
  \centering
  \includegraphics[width=1.2\linewidth]{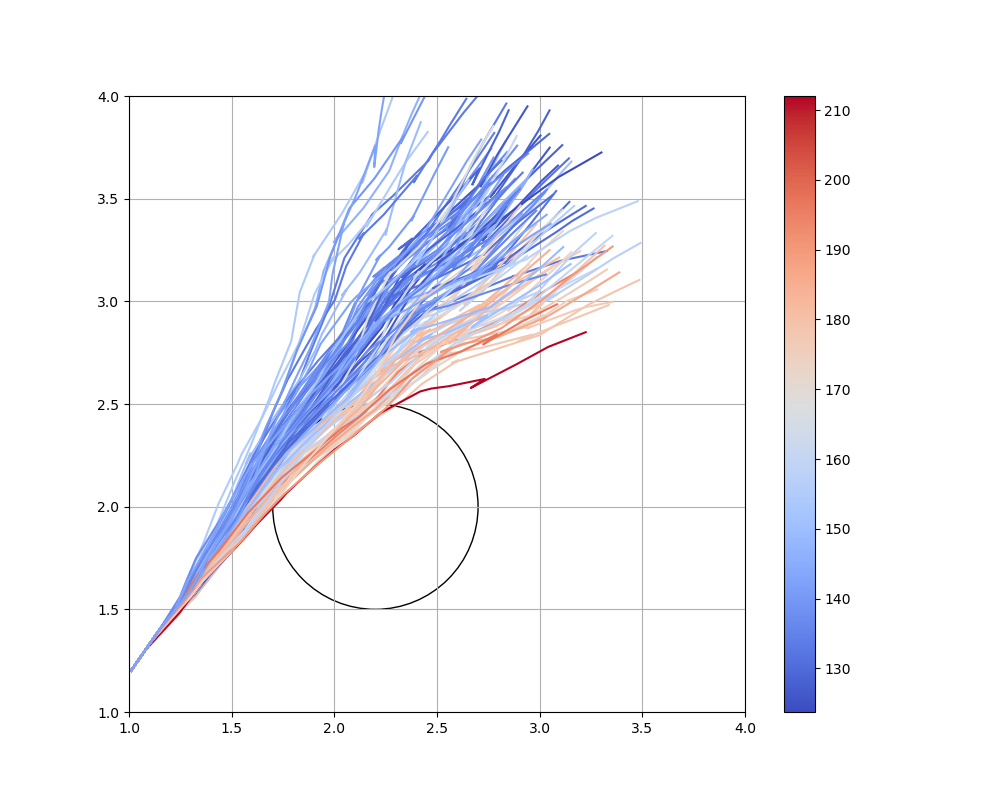}  
  \caption{100 sample trajectories of MPPI-CBF algorithm at time $t=0.35s$.}
  \label{fig:MPPI_CBF_sample}
\end{subfigure}
\caption{The first figure shows the MPPI algorithm sample uniformly in the configuration space, and the second figure shows  the MPPI-CBF algorithm sample in a restricted safe space (red: greater costs, blue: lower costs).}
\label{fig:single_plot}
\end{figure}



\subsection{Mutli-Obstacle Narrow Passage Problem}
We consider the same dynamical model as discussed in Section \ref{sec:uni-dyn} and further add multiple circular obstacles to simulate a narrow passage problem. The $n$ circular obstacles are located at $(O_{x_i},O_{y_i})$ with radius of $O_{r_i} (i=1,\dots,n)$, respectively. The distance between each circle is much smaller comparing to the radius. The safe set for the space is defined as: $C=\left \{(x,y)|(x-O_{x_i})^2 + (y-O_{y_i})^2 - O_{r_i} ^2 \geq 0, i = 1, \dots, n \right \}$.  We use the same cost function \eqref{eq:sim_cost_function} and similar control barrier functions $h_i(x,y)=(x-O_{x_i})^2 + (y-O_{y_i})^2 - O_{r_i} ^2$ in the single-obstacle avoidance scenario. The initial mean and variance of the noise distribution is: $\mu_0 = [0,0], \Sigma_0 =I$. We also use the same target position and desired velocity. For the MPPI algorithm, we use 40 time steps to guarantee the algorithm can reach the target position, and for the  MPPI-CBF algorithm, we use 20 time steps to save the computation time in the simulation.

The running trajectories solving the narrow passage problem using MPPI and MPPI-CBF algorithms are shown in Fig. \ref{fig:multi_traj}. The proposed MPPI-CBF algorithm admits safety guarantee (obstacle avoidance) and a shorter travel time to the target. However, with the same amount of samples, the MPPI algorithm cannot find the shortest path to the target because most sampled trajectories violate the safety constraints and get penalized. In the MPPI-CBF algorithm, however, almost all sample trajectories  lie in the safe set computed in the SDP optimization step, and thus the samples are more efficient, which explains why the MPPI-CBF algorithm needs fewer samples and time steps to solve the problem. As shown in Fig. \ref{fig:cost_multi}, with the same sample number, the MPPI-CBF algorithm has a more rapid cost reduction than the MPPI algorithm.
\begin{figure}
    \centering
    \includegraphics[width=6cm]{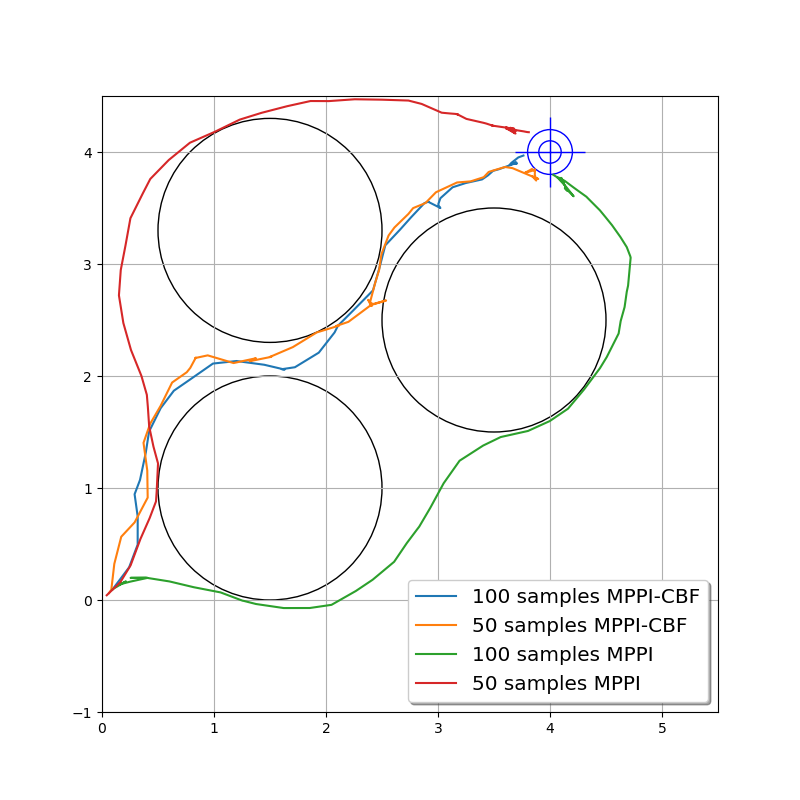}
    \caption{We use three circular obstacles to simulate a narrow passage problem, which is denoted by the black circles. The target is in blue. We plot the trajectories from the MPPI and the MPPI-CBF algorithms, with sample numbers of 50 and 100.}
    \label{fig:multi_traj}
\end{figure}
\begin{figure}
    \centering
    \includegraphics[width=6cm]{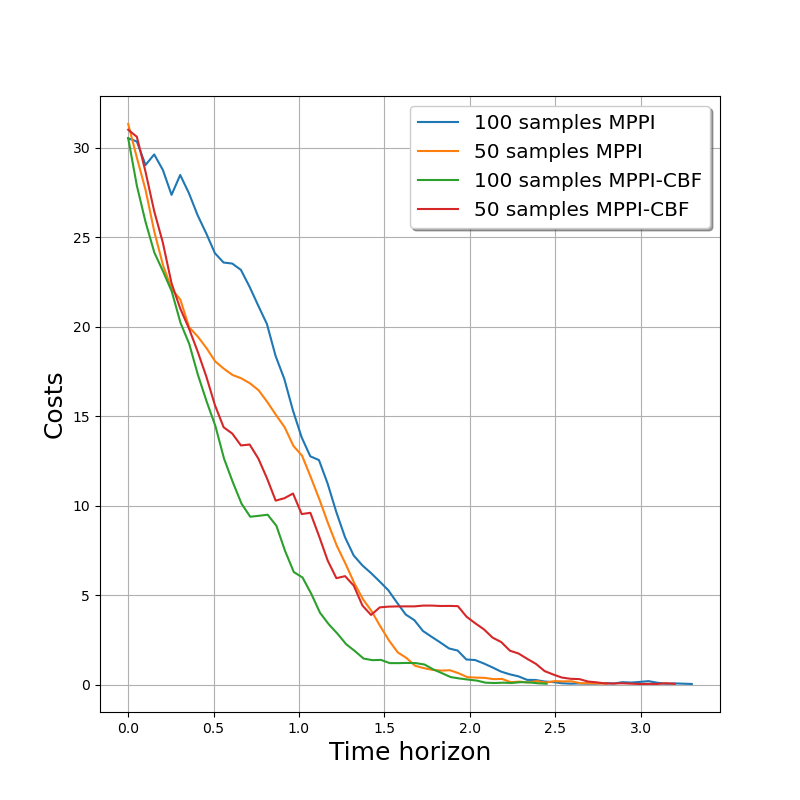}
    \caption{The cost of the MPPI algorithm and MPPI-CBF algorithm. In the multi-obstacle narrow passage simulation, the MPPI algorithm takes more time steps to solve the problem than the MPPI-CBF algorithm due to the higher sample efficiency of the latter approach.}
    \label{fig:cost_multi}
\end{figure}

We also plot the sample trajectories at time $t=3.5s$. The results are shown in Fig. \ref{fig:multi_MPPI_sample} and Fig. \ref{fig:multi_MPPI_CBF_sample}. The black circles represent the obstacles, and the sample trajectories are in different colors according to the costs. It is observed  that most sample trajectories of the MPPI algorithm violate the safe constraints. However, in the MPPI-CBF algorithm, almost all sample trajectories are in the safe region, which validates our claim on the sample efficiency of the proposed MPPI-CBF algorithm.

\begin{figure}[ht] 
\begin{subfigure}{.24\textwidth}
  \centering
  \includegraphics[width=1.2\linewidth]{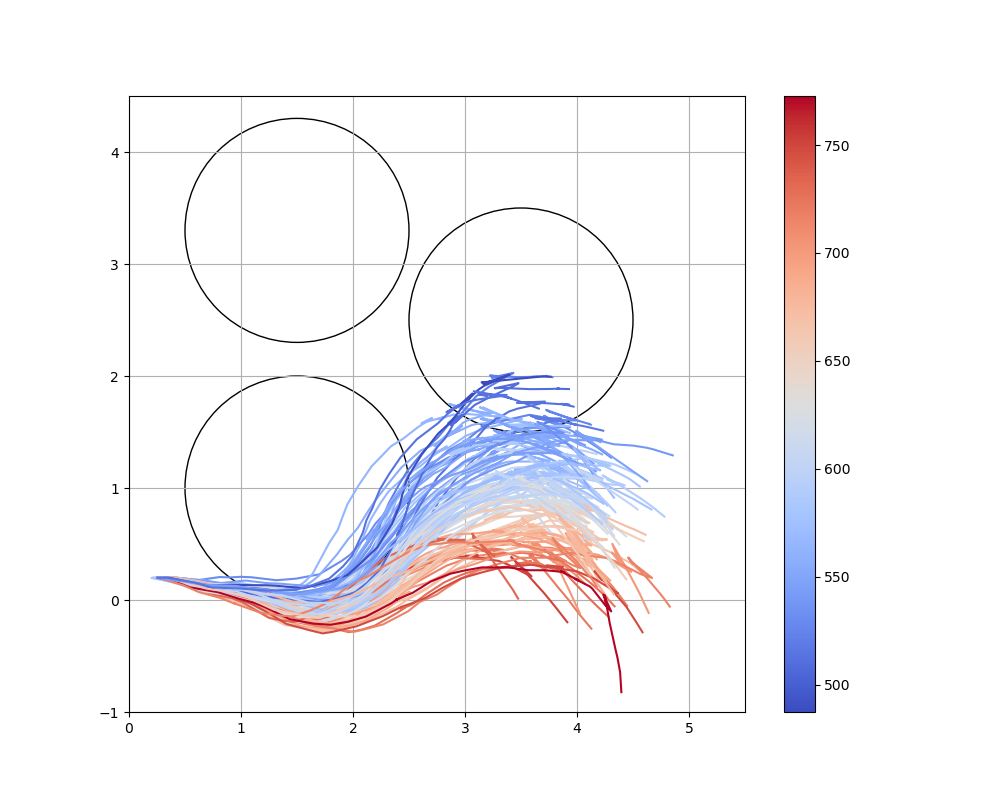}
  \caption{MPPI sample trajectories at time $t=0.35s$}
  \label{fig:multi_MPPI_sample}
\end{subfigure}
\begin{subfigure}{.24\textwidth}
  \centering
  \includegraphics[width=1.2\linewidth]{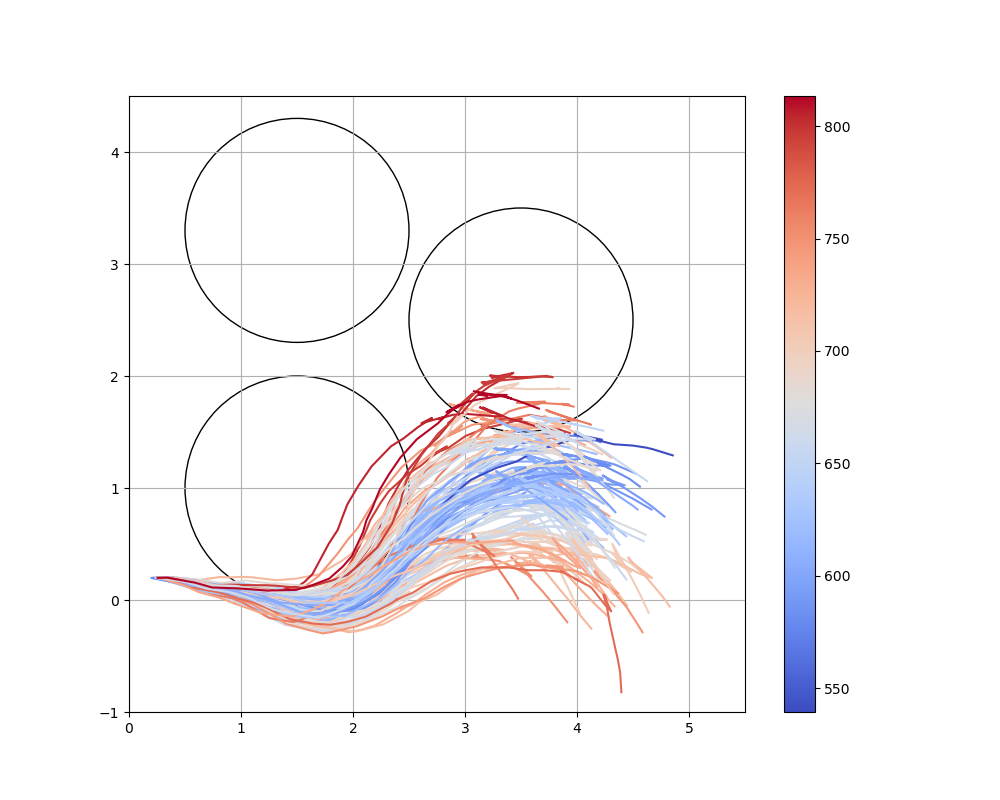}  
  \caption{MPPI-CBF sample trajectories at time $t=0.35s$}
  \label{fig:multi_MPPI_CBF_sample}
\end{subfigure}
\caption{The first figure shows that the MPPI algorithm samples uniformly in the configuration space, and the second figure shows that the MPPI-CBF algorithm samples from a restricted safe space (red: greater costs, blue: lower costs).}
\end{figure}

\section{Conclusion and Future Plan} \label{sec:con}
In this paper, we present an MPPI-CBF algorithm that uses the control barrier function to increase the safe efficiency and guarantee the safety of the sampling-based MPPI algorithm. We demonstrate that our algorithm can guarantee safety in cluttered environments, and our method can sample more efficiently. However, the algorithm requires solving SDP in each time step. The future work considers accelerating the algorithm by solving the SDP problem more efficiently and pursuing theoretical analysis on how the newly added state-dependent variance in the proposed algorithm deviates from the existing MPPI algorithm in terms of optimality.

\bibliographystyle{ieeetr}
\bibliography{citation}

\begin{thebibliography}{10}

\bibitem{paden2016survey}
B.~Paden, M.~{\v{C}}{\'a}p, S.~Z. Yong, D.~Yershov, and E.~Frazzoli, ``A survey
  of motion planning and control techniques for self-driving urban vehicles,''
  {\em IEEE Transactions on intelligent vehicles}, vol.~1, no.~1, pp.~33--55,
  2016.

\bibitem{falcone2007predictive}
P.~Falcone, F.~Borrelli, J.~Asgari, H.~E. Tseng, and D.~Hrovat, ``Predictive
  active steering control for autonomous vehicle systems,'' {\em IEEE
  Transactions on control systems technology}, vol.~15, no.~3, pp.~566--580,
  2007.

\bibitem{choe2015trajectory}
R.~Choe, J.~Puig, V.~Cichella, E.~Xargay, and N.~Hovakimyan, ``Trajectory
  generation using spatial pythagorean hodograph b{\'e}zier curves,'' in {\em
  AIAA Guidance, Navigation, and Control Conference}, p.~0597, 2015.

\bibitem{lavalle2001randomized}
S.~M. LaValle and J.~J. Kuffner~Jr, ``Randomized kinodynamic planning,'' {\em
  The international journal of robotics research}, vol.~20, no.~5,
  pp.~378--400, 2001.

\bibitem{kavraki1996probabilistic}
L.~E. Kavraki, P.~Svestka, J.-C. Latombe, and M.~H. Overmars, ``Probabilistic
  roadmaps for path planning in high-dimensional configuration spaces,'' {\em
  IEEE transactions on Robotics and Automation}, vol.~12, no.~4, pp.~566--580,
  1996.

\bibitem{kavraki1998analysis}
L.~E. Kavraki, M.~N. Kolountzakis, and J.-C. Latombe, ``Analysis of
  probabilistic roadmaps for path planning,'' {\em IEEE Transactions on
  Robotics and automation}, vol.~14, no.~1, pp.~166--171, 1998.

\bibitem{gelfand1990sampling}
A.~E. Gelfand and A.~F. Smith, ``Sampling-based approaches to calculating
  marginal densities,'' {\em Journal of the American statistical association},
  vol.~85, no.~410, pp.~398--409, 1990.

\bibitem{williams2016aggressive}
G.~Williams, P.~Drews, B.~Goldfain, J.~M. Rehg, and E.~A. Theodorou,
  ``Aggressive driving with model predictive path integral control,'' in {\em
  2016 IEEE International Conference on Robotics and Automation (ICRA)},
  pp.~1433--1440, IEEE, 2016.

\bibitem{yang2014spline}
K.~Yang, S.~Moon, S.~Yoo, J.~Kang, N.~L. Doh, H.~B. Kim, and S.~Joo,
  ``Spline-based rrt path planner for non-holonomic robots,'' {\em Journal of
  Intelligent \& Robotic Systems}, vol.~73, no.~1, pp.~763--782, 2014.

\bibitem{lavalle2006planning}
S.~M. LaValle, {\em Planning algorithms}.
\newblock Cambridge university press, 2006.

\bibitem{williams2018information}
G.~Williams, P.~Drews, B.~Goldfain, J.~M. Rehg, and E.~A. Theodorou,
  ``Information-theoretic model predictive control: Theory and applications to
  autonomous driving,'' {\em IEEE Transactions on Robotics}, vol.~34, no.~6,
  pp.~1603--1622, 2018.

\bibitem{wang2018learning}
W.~Wang, L.~Zuo, and X.~Xu, ``A learning-based multi-rrt approach for robot
  path planning in narrow passages,'' {\em Journal of Intelligent \& Robotic
  Systems}, vol.~90, no.~1, pp.~81--100, 2018.

\bibitem{li2002incremental}
T.-Y. Li and Y.-C. Shie, ``An incremental learning approach to motion planning
  with roadmap management,'' in {\em Proceedings 2002 IEEE International
  Conference on Robotics and Automation (Cat. No. 02CH37292)}, vol.~4,
  pp.~3411--3416, IEEE, 2002.

\bibitem{jaillet2005adaptive}
L.~Jaillet, A.~Yershova, S.~M. La~Valle, and T.~Sim{\'e}on, ``Adaptive tuning
  of the sampling domain for dynamic-domain rrts,'' in {\em 2005 IEEE/RSJ
  International Conference on Intelligent Robots and Systems}, pp.~2851--2856,
  IEEE, 2005.

\bibitem{yershova2005dynamic}
A.~Yershova, L.~Jaillet, T.~Sim{\'e}on, and S.~M. LaValle, ``Dynamic-domain
  rrts: Efficient exploration by controlling the sampling domain,'' in {\em
  Proceedings of the 2005 IEEE international conference on robotics and
  automation}, pp.~3856--3861, IEEE, 2005.

\bibitem{williams2017information}
G.~Williams, N.~Wagener, B.~Goldfain, P.~Drews, J.~M. Rehg, B.~Boots, and E.~A.
  Theodorou, ``Information theoretic mpc for model-based reinforcement
  learning,'' in {\em 2017 IEEE International Conference on Robotics and
  Automation (ICRA)}, pp.~1714--1721, IEEE, 2017.

\bibitem{pravitra2020}
J.~Pravitra, K.~A. Ackerman, C.~Cao, N.~Hovakimyan, and E.~A. Theodorou,
  ``L1-adaptive mppi architecture for robust and agile control of
  multirotors,'' in {\em 2020 IEEE/RSJ International Conference on Intelligent
  Robots and Systems (IROS)}, pp.~7661--7666, IEEE, 2020.

\bibitem{ames2019control}
A.~D. Ames, S.~Coogan, M.~Egerstedt, G.~Notomista, K.~Sreenath, and P.~Tabuada,
  ``Control barrier functions: Theory and applications,'' in {\em 2019 18th
  European Control Conference (ECC)}, pp.~3420--3431, IEEE, 2019.

\bibitem{cheng2019end}
R.~Cheng, G.~Orosz, R.~M. Murray, and J.~W. Burdick, ``End-to-end safe
  reinforcement learning through barrier functions for safety-critical
  continuous control tasks,'' in {\em Proceedings of the AAAI Conference on
  Artificial Intelligence}, vol.~33, pp.~3387--3395, 2019.

\bibitem{zhao2020adaptive}
P.~Zhao, Y.~Mao, C.~Tao, N.~Hovakimyan, and X.~Wang, ``Adaptive robust
  quadratic programs using control lyapunov and barrier functions,'' in {\em
  2020 59th IEEE Conference on Decision and Control (CDC)}, pp.~3353--3358,
  IEEE, 2020.

\bibitem{song2021generalization}
L.~Song, N.~Wan, A.~Gahlawat, C.~Tao, N.~Hovakimyan, and E.~A. Theodorou,
  ``Generalization of safe optimal control actions on networked multi-agent
  systems,'' 2021.

\bibitem{fisac2018general}
J.~F. Fisac, A.~K. Akametalu, M.~N. Zeilinger, S.~Kaynama, J.~Gillula, and
  C.~J. Tomlin, ``A general safety framework for learning-based control in
  uncertain robotic systems,'' {\em IEEE Transactions on Automatic Control},
  vol.~64, no.~7, pp.~2737--2752, 2018.

\bibitem{peng2012faster}
R.~Peng and K.~Tangwongsan, ``Faster and simpler width-independent parallel
  algorithms for positive semidefinite programming,'' in {\em Proceedings of
  the twenty-fourth annual ACM symposium on Parallelism in algorithms and
  architectures}, pp.~101--108, 2012.

\bibitem{wolkowicz2012handbook}
H.~Wolkowicz, R.~Saigal, and L.~Vandenberghe, {\em Handbook of semidefinite
  programming: theory, algorithms, and applications}, vol.~27.
\newblock Springer Science \& Business Media, 2012.

\bibitem{warmuth2008randomized}
M.~K. Warmuth and D.~Kuzmin, ``Randomized online pca algorithms with regret
  bounds that are logarithmic in the dimension,'' {\em Journal of Machine
  Learning Research}, vol.~9, no.~Oct, pp.~2287--2320, 2008.

\bibitem{helmberg1996interior}
C.~Helmberg, F.~Rendl, R.~J. Vanderbei, and H.~Wolkowicz, ``An interior-point
  method for semidefinite programming,'' {\em SIAM Journal on optimization},
  vol.~6, no.~2, pp.~342--361, 1996.

\end{thebibliography}

\end{document}